\definecolor{lightgray}{rgb}{0.9,0.9,0.9}
\definecolor{black}{rgb}{0,0,0}
\newtheorem{theorem}{Theorem}[chapter]
\newtheorem{myexamplethm}{Example}[chapter]
\newtheorem{myexcursionthm}{Excursion}[chapter]
\newcommand{\printsolution}{true}
\newtheorem{myproblemthm}{Problem}[chapter]
\newcommand{\problem}[4]{
  \begin{myproblemthm}[#1]\label{#2}
    #3
   \end{myproblemthm}
   \begin{footnotesize}
     \ifthenelse{\equal{\printsolution}{true}}{{\bf Solution:}#4}{}
   \end{footnotesize}
}
\newtheorem{myinlineproblemthm}{Inline problem}[chapter]
\newcommand{\inlineproblem}[4]{

  \noindent
  \begin{description}
  \item \begin{myinlineproblemthm}[#1]\label{#2}
      \begin{center}
        \fbox{
          \begin{minipage}{0.9\textwidth}
            \begin{footnotesize}
              {\normalfont #3}

              \vspace{1cm}

              \ifthenelse{\equal{\printsolution}{true}}{{\bf Solution:}#4}{}

            \end{footnotesize}
          \end{minipage}}
        \end{center}
   \end{myinlineproblemthm}
  \end{description}
}
\newcounter{sheetprobnumber}
\def\Corr{\mathop{\operator@font Corr}\nolimits}
\def\Cov{\mathop{\operator@font Cov}\nolimits}
\def\Expect{\mathop{\operator@font E}\nolimits}
\def\Var{\mathop{\operator@font Var}\nolimits}
\newcommand{\fixme}[2]{\ifthenelse{\equal{\printsolution}{true}}{\textbf{[#1]}\footnote{#2}}{}}
\newcommand{\ve}[1]{{\mathbf{#1}}}
\newcommand{\abs}[1]{\ensuremath{\left|#1\right|}}
\newcommand{\E}[1]{\mathds E\{{#1}\}}
\lstdefinelanguage[gnuplot]{C}
{keywords={set,xlabel,ylabel,logscale,output,plot,using,title,terminal,boxes,w},
  sensitive=false,
  morecomment=[l]{\#},
  morecomment=[s]{/*}{*/},
  morestring=[b]*,
}
\newtheorem{proposition}{\textbf{Proposition}}
\begin{document}
\title{Cooperative Secret Communication with Artificial Noise in Symmetric Interference Channel}

\author{{Jingge Zhu, Jianhua Mo and Meixia Tao, \IEEEmembership{Member,~IEEE}}
\thanks{Manuscript received 28-Jun-2010. The associate editor
coordinating the review of this letter and approving it for
publication was J. Jalden.} \thanks{The authors are with Dept. of
Electronic Engineering, Shanghai Jiao Tong University, P. R. China.
Emails: \{{zhujingge}, {mjh}, {mxtao}\}@sjtu.edu.cn.}
\thanks{This work was supported in part by the NSF of China under
grant 60902019 and Shanghai Pujiang Talent Program under grant
09PJ1406000.} }

\vspace{-0.5cm}

\maketitle

\begin{abstract}
We consider the symmetric Gaussian interference channel where two
users try to enhance their secrecy rates in a cooperative manner.
Artificial noise is introduced along with useful information. We
derive the power control and artificial noise parameter for two
kinds of optimal points, max-min point and single user point. It is
shown that there exists a critical value $P_c$ of the power
constraint, below which the max-min point is an optimal point on the
secrecy rate region, and above which time-sharing between
single user points achieves larger secrecy rate pairs. It is also
shown that artificial noise can help to enlarge the secrecy rate
region, in particular on the single user point.
\end{abstract}

%\IEEEpeerreviewmaketitle
\maketitle

\begin{keywords}
Gaussian interference channel, secrecy capacity, power control.
\end{keywords}

\section{Introduction}

The problem of secret communication is considered %has been firstly considered by Shannon in his then confidential paper \cite{Shannon:48}. %A rather pessimistic result was derived there that says, if perfect secrecy is sought, the entropy of the ``key'' to encrypt the message should be at least as large as the entropy of the message itself. However, it is pointed out in \cite{Wyner:75} that the pessimistic result by Shannon is due to the very strict assumption that both the legitimate receiver and the eavesdropper receive exactly the same message.
in the seminal paper of Wyner \cite{Wyner:75}. It is shown that
perfect secrecy can be achieved without any key, provided that the
receiver has a better channel than the eavesdropper. Recently, the
secret communication in wireless networks has been intensively
studied for various scenarios. Broadcast channel with confidential
message is considered in \cite{Liang:08} whereas multiple-access
channels with secrecy constraint is studied in \cite{Tekin:08} and
\cite{Liang_mac:08}.  The secrecy rate region of Gaussian
interference channel with an external eavesdropper is investigated
in \cite{Koyluoglu:09}.
%The Gaussian wiretap channel with a helping
%interferer is presented in \cite{Tang:09}.
% Interference channel in
% the setting of cognitive radios is considered in \cite{Liang:09}.

In this work, we consider the secret communication in a two-user
symmetric interference channel as shown in Fig.~\ref{fig:system
model}
where each receiver has to decode its own message while eavesdropping on the other's message.
%The formal definition of secrecy rate for this interference channel with confidential message can be found in e.g. \cite{LiuMaric:08}.
It is first pointed out in \cite{GoelNegi08} that by introducing
artificial noise in the transmission along with the useful
information, the secrecy rate region can be enlarged as the
artificial noise causes additional interference to the eavesdropper.
The key idea in our work is that although the two users in this
system do not
 trust each other because both can potentially eavesdrop on the other's message,
 nevertheless, they can enhance their secrecy rates in a cooperative manner.
 It is called \emph{semi-secret} in \cite{Yates:08} as the achieved secret
 communication depends on trusting other transmitters.

We derive the optimal power control and artificial noise parameter
for two different points on the secrecy rate region, namely, max-min
point and single user point. We show that depending on the
relationship of power constraint and channel conditions, both points
can potentially achieve optimal secrecy rate pairs. A criterion is
given explicitly. We also show that while it is not helpful in
improving the max-min point, artificial noise can enlarge the
secrecy rate on single user point.

     \begin{figure}[t!]
       \centering
       \includegraphics[scale=0.42]{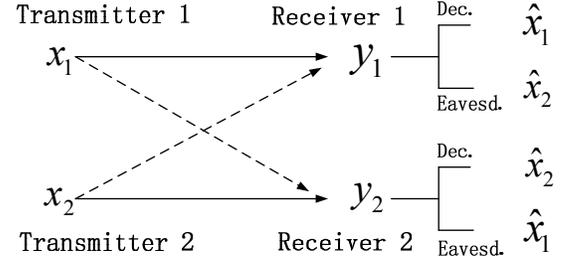}
       \caption{Interference channel with confidential messages.}
       \label{fig:system model}
       \vspace{-0.2cm}
   \end{figure}

Notation $(\cdot)^T$ denotes transpose. $\E{x}$ stands for
expectation of random variable $x$. The function $\log(\cdot)$ is
taken to the base 2.

\section{Symmetric Interference Channel with Artificial Noise}

The symmetric interference channel can be modeled as
\begin{subequations}
\begin{align}
 &y_1=\sqrt{a}x_1+\sqrt{a_c}x_2+n_1\\
 &y_2=\sqrt{a_c}x_1+\sqrt{a}x_2+n_2
\end{align}
\label{eq:scalarchannel}
\end{subequations}
where $a$ and $a_c$ are the gains for the direct channels and cross
channels, respectively, the transmitted signal $x_i$, for $i=1, 2$,
is subject to the peak power constraint $\E{\abs{x_i}^2}=p_i\leq P$,
and $n_i$ is the additive white Gaussian noise with zero mean and
variance $N$. The transmitted signal $x_i$ is composed of message
$s_i$ and artificial noise $z_i$, i.e.\ $x_i=s_i+z_i$. Here $z_i$ is
chosen to be Gaussian hence cannot be decoded by any receiver. We
split the transmission power as $\E{s_i^2}=(1-\lambda_i)p_i$ and
$\E{z_i^2}=\lambda_ip_i$, with parameter $\lambda_i\in[0,1]$.

The formal definition of secrecy rate for the interference channel
can be found in \cite{LiuMaric:08}, which considers the case where
only one user sends artificial noise. We generalize their work to
allow both users to use artificial noise and define the achievable
secrecy rate region under our setting as follows.

Let $(R_1,R_2)$ denote a rate pair satisfying
(\ref{cons:region_scalar}).
\begin{figure*}
\begin{subequations}
\begin{align}
&0\leq R_1\leq R_1^s:=\log\left(1+\frac{a(1-\lambda_1)p_1}{N+a_cp_2+a\lambda_1p_1}\right)-\log\left(1+\frac{a_c(1-\lambda_1)p_1}{N+a_c\lambda_1p_1+a\lambda_2p_2}\right) \label{cons:region_1}\\
&0\leq R_2\leq
R_2^s:=\log\left(1+\frac{a(1-\lambda_2)p_2}{N+a_cp_1+a\lambda_2p_2}\right)-\log\left(1+\frac{a_c(1-\lambda_2)p_2}{N+a_c\lambda_2p_2+a\lambda_1p_1}\right)
\label{cons:region_2}
\end{align}
\label{cons:region_scalar}
\end{subequations}
%\vspace*{2pt}
\hrulefill
\end{figure*}
%.
Then the secrecy rate region is the union of all possible rate pairs
$(R_1,R_2)$ for transmission power $0\leq p_i\leq P$ and power
splitting parameter $0\leq\lambda_i\leq1$, $i=1,2$.

The achievability of the secrecy rate region defined above can be
justified using the similar stochastic encoding argument in
\cite{LiuMaric:08}.  If time-sharing between transmission strategies
is allowed, the convex hull of the above rate region can also be
achieved.   We can interpret the constraint
(\ref{cons:region_scalar}) as follows: In any working system, the
receiver can always decode its own message successfully by
considering interferences as pure noise, then it subtracts the
already decoded message and try to decode the message from the other
transmitter. The difference on the right-hand side of each of the
two constraints is the maximum amount of information one can hide
from the other, from an information theoretic point of view.

Note that nonnegative secrecy rate only exists when the direct
channel is stronger than the cross channel. This can be verified
directly with the expression of $R_1^s$ or $R_2^s$. So we only
consider the case where $a>a_c$ hereafter.

\section{Main Results}
In this section, we present the main results on the optimal power
allocation $\{p_i,\lambda_i\}_{i=1}^2$ on two points of the secrecy
rate region, namely, max-min point and single user point.

\subsection{Max-min Point}

We first define an optimal point in the following sense:
\begin{align*}
R_{min}^*:=\max_{\{\lambda_i,p_i\}} \min\{R_1, R_2\}.
\end{align*}
Note that since the second inequality in (\ref{cons:region_1}) and
(\ref{cons:region_2}) can be tight simultaneously, the above
definition is equivalent to $R_{min}^*:=\max_{\{\lambda_i,p_i\}}
\min\{R_1^s, R_2^s\}$.

\begin{proposition}
For interference channel (\ref{eq:scalarchannel}),
$R_{min}^*=\log\left(\frac{(a+a_c)^2}{4aa_c}\right)$ with
$\lambda_1^*=\lambda_2^*=\lambda^*$, where $\lambda^*$ can be chosen
arbitrarily from the interval
$[0,\frac{a_c}{a}-\frac{N(a-a_c)}{Pa(a+a_c)}]$, and
$p_1^*=p_2^*=p^*=\frac{N(a-a_c)}{(a+a_c)(a_c-a\lambda^*)}$. Among
the maximizing points, $\lambda^*=0$ yields the minimum transmission
power $p^*_{min}=\frac{N(a-a_c)}{a_c(a+a_c)}$.  If the power
constraint $P<p^*_{min}$, the maximizing points are  $\lambda^*=0$
and $p^*=P$. \label{prop:maxmin}
\end{proposition}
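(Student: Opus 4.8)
The plan is to collapse the four-parameter maximization to a one-dimensional one and then solve that explicitly; the explicit solution will at once deliver the value $\log\frac{(a+a_c)^2}{4aa_c}$, the admissible interval for $\lambda^*$, the formula for $p^*$, and the low-power corner case.

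First I would show that a maximizer may be taken with $R_1^s=R_2^s$. Write $z_i=\lambda_ip_i$ for the noise powers and $s_i=(1-\lambda_i)p_i$ for the signal powers; one can decrease $s_2$ (shrinking $p_2$ while raising $\lambda_2$ so that $z_2$ stays fixed) without leaving the feasible box, and from (\ref{cons:region_1})--(\ref{cons:region_2}) this move strictly increases $R_1^s$, while $R_2^s=0$ as soon as $s_2=0$. Suppose a maximizer had $R_1^s<R_2^s$; its value, namely $R_1^s$, is positive, since already the symmetric choice $\lambda_i=0$, $p_i=\min\{P,\,N(a-a_c)/(a_c(a+a_c))\}$ yields a positive minimum. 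Driving $s_2$ downward strictly increases $R_1^s$, and at $s_2=0$ one reaches $R_2^s=0<R_1^s$; hence the two rates cross along the way, and at the crossing the minimum equals the by-then-larger $R_1^s$, contradicting maximality. Thus $R_1^s=R_2^s$ at every maximizer, and by the same argument with the users interchanged $R_2^s>R_1^s$ is likewise impossible.

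The hard part is to deduce from $R_1^s=R_2^s$ that this common value is at most $\log\frac{(a+a_c)^2}{4aa_c}$ for all admissible parameters. No single symmetric estimate suffices: bounding $R_1^s$ alone, or $R_1^s+R_2^s$, or $\sqrt{R_1^sR_2^s}$, each fails, since an asymmetric choice can push one of the $R_i^s$ up to $\log(a/a_c)>\log\frac{(a+a_c)^2}{4aa_c}$; it is precisely the equality $R_1^s=R_2^s$ that has to be exploited as a constraint. I would pursue one of two routes: (a) repeat the perturbation/exchange argument in the pair $(z_1,z_2)$ and use the invariance of $\min\{R_1^s,R_2^s\}$ under swapping the two users to force $p_1=p_2$ and $\lambda_1=\lambda_2$; or (b) write $R_1^s=R_2^s$ as the logarithm of an explicit rational function of $(z_1,z_2,s_1,s_2)$ and maximize it on the surface $R_1^s=R_2^s$ by Lagrange multipliers, verifying that the only stationary point is the symmetric one. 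I expect essentially all of the work to sit here, in ruling out spurious asymmetric optima.

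On the symmetric locus $p_1=p_2=p$, $\lambda_1=\lambda_2=\lambda$, combining the two $\log(1+\cdot)$ terms gives $R^s:=R_1^s=R_2^s=\log\frac{(N+(a+a_c)p)(N+(a+a_c)\lambda p)}{(N+a_cp+a\lambda p)^2}$, and the substitution $x:=\frac{(1-\lambda)p}{N+a_cp+a\lambda p}$ turns this into
\[
R^s=\log\bigl(1+(a-a_c)x-aa_cx^2\bigr).
\]
The quadratic $1+(a-a_c)x-aa_cx^2$ is maximized at $x^\star=\frac{a-a_c}{2aa_c}$ with value $\frac{(a+a_c)^2}{4aa_c}$, which is $R_{min}^*$. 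A direct check shows $x$ is increasing in $p$ and decreasing in $\lambda$, so as $(p,\lambda)$ ranges over $[0,P]\times[0,1]$ the attainable values of $x$ fill the interval $\bigl[0,\,P/(N+a_cP)\bigr]$. If $P\ge p^*_{min}:=\frac{N(a-a_c)}{a_c(a+a_c)}$, then $x^\star$ is attainable, and solving $x=x^\star$ for $p$ yields $p^*=\frac{N(a-a_c)}{(a+a_c)(a_c-a\lambda)}$, which is $\ge 0$ and $\le P$ precisely for $\lambda\in\bigl[0,\,\frac{a_c}{a}-\frac{N(a-a_c)}{Pa(a+a_c)}\bigr]$; since $p^*$ is increasing in $\lambda$ on that interval, its minimum is at $\lambda=0$, equal to $p^*_{min}$. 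If instead $P<p^*_{min}$, then $P/(N+a_cP)<x^\star$ — an inequality equivalent to $P<p^*_{min}$ — so the quadratic is strictly increasing on the whole feasible $x$-interval and is maximized by making $x$ as large as possible, i.e.\ $\lambda^*=0$ and $p^*=P$. This settles both regimes.
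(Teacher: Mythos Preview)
Your treatment of the symmetric locus is correct and genuinely cleaner than the paper's. Where the paper sets $\partial R^s/\partial p=\partial R^s/\partial\lambda=0$, reports the one-parameter family $p(\lambda)=\frac{N(a-a_c)}{(a+a_c)(a_c-a\lambda)}$ as the solution, and appeals to second derivatives, your substitution $x=\frac{(1-\lambda)p}{N+a_cp+a\lambda p}$ collapses $R^s$ to $\log\bigl(1+(a-a_c)x-aa_cx^2\bigr)$ and makes everything fall out of one quadratic: the value $\log\frac{(a+a_c)^2}{4aa_c}$, the degeneracy along the curve $p^*(\lambda)$, the admissible $\lambda$-interval from $0\le p^*(\lambda)\le P$, and the boundary case $P<p^*_{min}$ via monotonicity of the quadratic on $[0,P/(N+a_cP)]$. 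This is a real improvement in exposition.

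On the reduction to symmetry, the paper does not follow either of your routes. It never isolates the intermediate fact $R_1^s=R_2^s$; instead it posits the symmetric optimum, computes the gradients $\nabla R_1^s$, $\nabla R_2^s$ at $(\lambda^*,p^*)$, forms the rank-one matrix $\mathbf{A}=\nabla R_1^s\,\nabla R_2^{sT}$, observes its trace (the inner product of the two gradients) is negative, and concludes that $\bigl(\nabla R_1^{sT}\mathbf{d}\bigr)\bigl(\nabla R_2^{sT}\mathbf{d}\bigr)<0$ for every deviation $\mathbf{d}$, so one of the two rates must drop. This is a purely local first-order check, and in fact the strict inequality cannot hold for all $\mathbf{d}\neq 0$ (take $\mathbf{d}\perp\nabla R_1^s$), so the paper's argument at this step is heuristic. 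Your route (b), Lagrange multipliers on the constraint surface $R_1^s=R_2^s$, would if carried out give a genuinely global statement, and your preliminary perturbation argument forcing $R_1^s=R_2^s$ at any maximizer is a clean step the paper does not have. The acknowledged gap in your plan is thus no wider than the one in the published proof; what you gain is a much more transparent endgame once symmetry is in hand.
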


\begin{proof} It is intuitive to see that in order to achieve the
point $R_{min}^*$, we need $p_1=p_2=p$ and
$\lambda_1=\lambda_2=\lambda$ because of the competitive nature of
the two users. This will be justified at the end of the proof. We
will now maximize
\begin{equation*}
R_1^s=R_2^s=R^s=\log\frac{(N+a_cp+ap)(N+a_c\lambda p+a\lambda
p)}{(N+a_cp+a\lambda p)(N+a\lambda p+a_cp)}.
\end{equation*}
The maximum value of $R^s$ should satisfy:
%\begin{subequations}
\begin{equation}
\frac{\partial{R^s}}{\partial p}=0,\
\frac{\partial{R^s}}{\partial\lambda}=0 \label{eq:critical_pt}
\end{equation}
%\end{subequations}
It is found that for arbitrary $\lambda$, choosing
\begin{equation}
p(\lambda)=\frac{N(a-a_c)}{(a+a_c)(a_c-a\lambda)} \label{eq:opt_p}
\end{equation}
always forms a solution to (\ref{eq:critical_pt}). It can also be
shown that the second-order derivatives of $R^s$ are negative at
these points, i.e.,\ they are all maximizing points of the function.
Taking the constraints $\lambda\in[0,1]$ and $p(\lambda)\in[0,P]$
into consideration, we see that the valid value of $\lambda^*$
should be in the interval
$[0,\frac{a_c}{a}-\frac{N(a-a_c)}{Pa(a+a_c)}]$, and the optimal
$p^*$ is obtained by substituting $\lambda^*$ into (\ref{eq:opt_p}).
Note that the possible maximizing points on the boundary
($\lambda=0$ for example, which cannot be found by solving the
equations (\ref{eq:critical_pt})) are also included in the solution.

The minimum required transmission power maintaining the secrecy rate
$R_{min}^*$ is $p^*_{min}=\frac{N(a-a_c)}{a_c(a+a_c)}$ by choosing
$\lambda^*=0$. In the case where $P<p^*_{min}$, there is no solution
to (\ref{eq:critical_pt}) satisfying the constraint
$\lambda\in[0,1]$, and the $R_{min}^*$ is achieved by $\lambda^*=0$
and $p^*=P$ since $R^s$ is now increasing with $p$ and decreasing
with $\lambda$.

We now justify that the same transmission power $p$ and power splitting parameter $\lambda$ are indeed required to achieve $R_{min}^*$. Define $\ve A=\nabla R^s_1(\lambda^*,p^*)\nabla R_2^{sT}(\lambda^*,p^*)$, where $\nabla R^s_i=[\partial R^s_i/\partial p_1,\partial R^s_i/\partial p_2,\partial R^s_i/\partial \lambda_1,\partial R^s_i/\partial \lambda_2]^T$ is the gradient of $R^s_i$, and $\nabla R^s_i(\lambda^*,p^*)$ means $\nabla R^s_i$ evaluated at the point $\lambda_1=\lambda_2=\lambda^*$, $p_1=p_2=p^*$. It is clear that $\ve A$ has only one eigenvalue which is equal to its trace, given by
 \begin{equation*}
tr(\ve A)=\frac{\partial R^s_1}{\partial p_1}\frac{\partial
R^s_2}{\partial p_1}+\frac{\partial R^s_1}{\partial
p_2}\frac{\partial R^s_2}{\partial p_2}+\frac{\partial
R^s_1}{\partial \lambda_1}\frac{\partial R^s_2}{\partial
\lambda_1}+\frac{\partial R^s_1}{\partial \lambda_2}\frac{\partial
R^s_2}{\partial \lambda_2}.
\end{equation*}

Straightforward calculation shows that $\frac{\partial
R^s_i}{\partial p_j}(\lambda^*, p^*)$
 is negative for $i\neq j$ and positive for $i=j$. Also, $\frac{\partial R^s_i}{\partial \lambda_j}(\lambda^*, p^*)$
 is positive for $i\neq j$ and negative for $i=j$. So $tr(\ve A)$ is always negative. Therefore, $\ve A$ is negative definite hence $\ve d^T\ve A\ve d<0$ for any $\ve d\neq \ve 0$. Note that $\ve d^T\ve A\ve d$ can also be rewritten as
\begin{equation}
 \nabla R_1^{sT}(\lambda^*,p^*)\ve d\cdot\nabla R_2^{sT}(\lambda^*,p^*)\ve
 d<0.
 \label{ineq}
\end{equation}
Inequality (\ref{ineq}) means that any deviation from the optimal
points will decrease the value of either $R^s_1$ or $R^s_2$,
thereby, the minimum of the two becomes smaller. In other words, the
deviated point cannot be a max-min point. Thus, our choices of $p$
and $\lambda$ are validated and the proposition is proved.
\end{proof}

\subsection{Single User Point}

We now investigate another point, called \textit{single user point},
on which one user tries to maximize its own secrecy rate with the
help of the other user, i.e. $R_{su,i}^*=\max R_i^s$. It is clear that due to the symmetry, we have $R_{su,1}^*=R_{su,2}^*=R_{su}^*$. We will show that
through this kind of cooperation, one user obtains an appreciably
large secrecy rate while the secrecy rate of the other is zero. In
addition, we also find that through time-sharing, we can achieve
larger rate pairs than the max-min point when $P$ is greater than a
critical value.

\begin{proposition}
The single user point $R_{su}^*$ is obtained with $(\lambda_1^*=0$,
$\lambda_2^*=1, p_1^*=P$, $p_2^*=\frac{\Delta-N}{a+a_c})$ or
$(\lambda_1^*=1$, $\lambda_2^*=0$, $p_1^*=\frac{\Delta-N}{a+a_c}$,
$p_2^*=P)$ and it is given by
\begin{align}
&R_{su}^*=\max R_1^s=\max R_2^s  \nonumber\\
&=\log\left(\frac{(aN+a_{c}\Delta)(a_{c}N+a\Delta)+a(a+a_{c})(a_{c}N+a\Delta)P}{(aN+a_{c}\Delta)(a_{c}N+a\Delta)+a_{c}(a+a_{c})(aN+a_{c}\Delta)P}\right)
\label{eq:single user}
\end{align}
with $\Delta=\sqrt{N^2+(a+a_c)NP}$. \label{prop:single_user}
\end{proposition}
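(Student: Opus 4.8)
The plan is to maximize $R_1^s$ directly over the feasible set, after a change of variables that decouples the ``message power'' from the ``artificial-noise power''. For $i=1,2$ set $q_i:=(1-\lambda_i)p_i$ and $r_i:=\lambda_i p_i$; then the feasible set is simply $\{q_i\ge 0,\ r_i\ge 0,\ q_i+r_i\le P\}$ and, from (\ref{cons:region_1}),
\begin{equation*}
R_1^s=\log\!\left(1+\frac{a q_1}{N+a_c q_2+a_c r_2+a r_1}\right)-\log\!\left(1+\frac{a_c q_1}{N+a_c r_1+a r_2}\right).
\end{equation*}
Since $q_2$ occurs only in the first denominator with positive coefficient $a_c$, $R_1^s$ is strictly decreasing in $q_2$, so any maximizer has $q_2=0$; equivalently $\lambda_2^*=1$, and then $R_2^s=0$ (user $2$ sends pure noise). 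Exchanging the roles of users $1$ and $2$ and using $R_{su,1}^*=R_{su,2}^*$ will yield the second family of optimizers, so it suffices to treat $R_1^s$.

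Next I would show the ``active'' user injects no artificial noise, i.e.\ $r_1^*=0$. Holding $q_1$, $q_2=0$ and $r_2$ fixed, differentiation gives
\begin{equation*}
\frac{\partial R_1^s}{\partial r_1}=q_1\!\left(\frac{a_c^2}{B(B+a_c q_1)}-\frac{a^2}{A(A+a q_1)}\right),\qquad A:=N+a_c r_2+a r_1,\quad B:=N+a_c r_1+a r_2,
\end{equation*}
and the elementary identity $aB-a_cA=(a-a_c)\bigl(N+(a+a_c)r_2\bigr)>0$ (this is where $a>a_c$ enters, and note it is independent of $r_1$) yields both $a_c^2A^2<a^2B^2$ and $a a_c^2 q_1 A<a_c a^2 q_1 B$, hence $a_c^2A(A+aq_1)<a^2B(B+a_c q_1)$, so the derivative is negative whenever $q_1>0$. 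Therefore $r_1^*=0$, i.e.\ $\lambda_1^*=0$. Putting $r_1=0$, the same sign computation shows $\partial R_1^s/\partial q_1>0$, so $q_1^*=p_1^*=P$.

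It then remains to maximize over the single variable $r_2=p_2\in[0,P]$ the reduced expression
\begin{equation*}
R_1^s=\log\frac{(N+a_c r_2+aP)(N+a r_2)}{(N+a_c r_2)(N+a r_2+a_c P)} .
\end{equation*}
Setting the logarithmic derivative to zero and clearing denominators, every cross term cancels and (again using $a>a_c$) the stationarity condition collapses to $(a+a_c)r_2^2+2Nr_2-NP=0$, whose unique nonnegative root is $r_2^*=\frac{\Delta-N}{a+a_c}$ with $\Delta=\sqrt{N^2+(a+a_c)NP}$. A short check gives $0\le r_2^*\le P$, and the derivative is positive for $r_2<r_2^*$ and negative for $r_2>r_2^*$, so $r_2^*$ is the global maximizer. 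Substituting $N+a_c r_2^*=\frac{aN+a_c\Delta}{a+a_c}$ and $N+a r_2^*=\frac{a_cN+a\Delta}{a+a_c}$ into the display and expanding yields the closed form (\ref{eq:single user}).

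The step I expect to be the main obstacle is the monotonicity claim $r_1^*=0$ --- proving that the user maximizing its own secrecy rate gains nothing from its own artificial noise. Unlike the monotonicities in $q_1$ and $q_2$, this is not a one-line observation; it requires the explicit sign analysis of $\partial R_1^s/\partial r_1$ above, resting on the factorization $aB-a_cA=(a-a_c)(N+(a+a_c)r_2)$. Once that is established, the remaining one-dimensional optimization reduces to a routine quadratic, and obtaining (\ref{eq:single user}) is a mechanical simplification.
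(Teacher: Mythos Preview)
Your proof is correct and follows the same strategy as the paper: establish monotonicity to pin down $\lambda_1^*=0$, $\lambda_2^*=1$, $p_1^*=P$, then solve the remaining one-dimensional optimization in $p_2$ to obtain $p_2^*=(\Delta-N)/(a+a_c)$. The paper's proof simply asserts those monotonicities ``from (2a)'' and checks the second-order condition, whereas your change of variables $(q_i,r_i)$ and the explicit sign analysis of $\partial R_1^s/\partial r_1$ via $aB-a_cA=(a-a_c)(N+(a+a_c)r_2)$ supply the justification the paper omits --- the step you correctly flagged as the main obstacle is precisely the one the paper glosses over.
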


\begin{proof} Without loss of generality, we analyze the single user point for user 1 only. From (2a), $R_1^s$ is decreasing with
$\lambda_1$  and increasing with both $\lambda_2$ and $p_1$. Hence,
to maximize $R_1^s$ we should have $\lambda_1^*=0$, $\lambda_2^*=1$
and $p_1=P$. Substituting them into $R_1^s$ and solving the equation
$\frac{\partial R_1^s}{\partial p_2}=0$ for $p_2$, we find
$p_2^*=\frac{\Delta-N}{a+a_c}<\frac{P}{2}$. It can again be
justified with the second-order derivative that it is indeed the
maximized point. \end{proof}

We now compare the secrecy rate pairs achieved by max-min point and
single user point. At max-min point, each user gets the same secrecy
rate $R_{min}^*$, given in Proposition \ref{prop:maxmin}. By equal
time-sharing between the two single user points, each user gets the
same secrecy rate $R_{su}^*/2$, where $R_{su}^*$ is given in
Proposition \ref{prop:single_user}.

\begin{proposition}
When the power constraint $P$ is larger than the critical power $P_c
=
\frac{N(a-a_{c})(a^{2}+a_{c}^{2}+6aa_{c})}{(a_{c}^{2}+3aa_{c})^{2}}$,
equal time-sharing between single user points achieves larger rates
than the max-min point, otherwise, the max-min point achieves larger
rates.
\end{proposition}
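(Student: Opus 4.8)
The plan is to compare the two strategies directly from their closed forms. Equal time-sharing between the two single-user points gives the symmetric rate pair $(R_{su}^*/2,R_{su}^*/2)$ and the max-min point gives $(R_{min}^*,R_{min}^*)$, so the claim is exactly that $R_{su}^*/2>R_{min}^*$ iff $P>P_c$ (with equality at $P_c$). Since $\log$ is increasing, this is a comparison of the arguments of the two logarithms. By Proposition~\ref{prop:maxmin}, for $P\ge p^*_{min}$ one has $2R_{min}^*=\log\frac{(a+a_c)^4}{16a^2a_c^2}$, a constant, and by Proposition~\ref{prop:single_user}, $R_{su}^*=\log(N_1/N_2)$, where $N_1$ and $N_2>0$ are the numerator and denominator of the fraction in \eqref{eq:single user}. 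Hence, on this range, the whole statement reduces to determining the sign of
\begin{equation*}
D:=16a^2a_c^2\,N_1-(a+a_c)^4\,N_2 ,
\end{equation*}
with $\Delta=\sqrt{N^2+(a+a_c)NP}$ appearing in $N_1,N_2$: time-sharing wins precisely when $D>0$.

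First I would change variables from $P$ to $\Delta$, which is a strictly increasing bijection of $[0,\infty)$ onto $[N,\infty)$ via $P=(\Delta^2-N^2)/((a+a_c)N)$; after this substitution $D$ is a cubic polynomial in $\Delta$ with coefficients polynomial in $a,a_c,N$. The crux is to factor it. I expect
\begin{equation*}
D=c\,\Delta\,(\Delta-\Delta_c)(\Delta-\Delta_c'),\qquad \Delta_c=\frac{Na(a+3a_c)}{a_c(3a+a_c)},
\end{equation*}
with leading coefficient $c>0$ (one checks $16a^5>a_c(a+a_c)^4$ for all $a>a_c$) and $\Delta_c'<0$. Rather than brute force, the efficient route is to set $D=0$, cancel the positive common factors that appear, and observe that the surviving relation is \emph{linear} in $\Delta$; combined with $\Delta^2=N^2+(a+a_c)NP$ it becomes a quadratic in $\Delta$ whose product of roots is negative, so it has exactly one positive root, and a direct substitution confirms that root is the $\Delta_c$ above, which one then checks satisfies $\Delta_c^2=N^2+(a+a_c)NP_c$ for precisely the $P_c$ in the statement. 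The cancelled factor $\Delta$ is the unphysical root $\Delta=0$, visible from the fact that $N_1$ and $N_2$ both vanish at $P=-N/(a+a_c)$.

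With the factorization the conclusion is immediate: on the physical range $\Delta\ge N>0$ both $\Delta$ and $\Delta-\Delta_c'$ are positive, so the sign of $D$ equals the sign of $\Delta-\Delta_c$, i.e.\ of $P-P_c$. Thus for $P>P_c$ we get $R_{su}^*>2R_{min}^*$ and equal time-sharing achieves the larger symmetric secrecy rate, while for $p^*_{min}\le P<P_c$ the max-min point does. As an independent check on the sign: as $P\to\infty$ the fraction in \eqref{eq:single user} tends to $a^2/a_c^2$, so $R_{su}^*-2R_{min}^*\to\log\frac{16a^4}{(a+a_c)^4}=4\log\frac{2a}{a+a_c}>0$ since $a>a_c$, consistent with $D>0$ there.

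Finally, the low-power regime $0\le P<p^*_{min}$ needs a separate, easier treatment, since there Proposition~\ref{prop:maxmin} gives $R_{min}^*=\log\frac{N(N+(a+a_c)P)}{(N+a_cP)^2}$ instead. One first verifies $p^*_{min}<P_c$ — putting $P_c-p^*_{min}$ over a common denominator, its numerator is proportional to $a(a-a_c)^2>0$ — so the whole interval lies in the region where the max-min point is claimed to win, and one only has to show $R_{su}^*(P)/2<R_{min}^*(P)$ there. This again reduces to a polynomial inequality (of lower degree) comparing $N_1/N_2$ with $\big(N(N+(a+a_c)P)/(N+a_cP)^2\big)^2$; at $P=0$ both rates vanish, and a Taylor expansion shows $R_{min}^*$ leaves the origin with slope $(a-a_c)/(N\ln2)$, twice that of $R_{su}^*/2$. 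I expect the only genuine obstacle to be computational: carrying out the reduction of $D=0$ to the linear relation cleanly enough to read off the factor $\Delta-\Delta_c$ and to match $\Delta_c$ with the stated $P_c$; after that it is pure sign analysis.
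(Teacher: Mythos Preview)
Your approach is correct and essentially the same as the paper's: solve $R_{su}^*=2R_{min}^*$ for $P$ and determine the sign of the difference on each side. The paper's proof is a one-line sketch that simply invokes ``the monotonicity of $R_{su}^*$'' rather than your explicit $P\mapsto\Delta$ substitution and factorization, and it does not treat the low-power regime $P<p^*_{min}$ at all; your plan is considerably more complete than what the paper actually provides.
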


\begin{proof} This proposition can be easily proved by solving $R_{su}^* = 2R_{min}^*$ and using the monotonicity of
$R_{su}^*$. \end{proof}

\section{Numerical Examples and Discussions}

   \begin{figure}[tbp]
    \centering
        \includegraphics[scale=0.5]{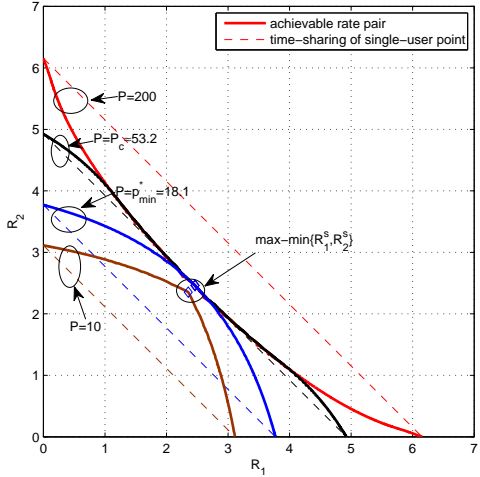}
    \caption{Achievable secrecy rate region for different power constraint
                at $a = 1, a_c =0.05$ and $N = 1$. The critical power $P_c\approx53.2$ and $p_{min}^{*}\approx18.1$.}
    \label{fig:diff power constraint}
    \vspace{-0.2cm}
  \end{figure}
Fig. \ref{fig:diff power constraint} demonstrates some numerical
results on the achievable secrecy rate region for different power
constraints with fixed channel condition. When $P=P_c$, the max-min
point (diamond in the figure) is the same point obtained by equal
time-sharing of two single user points, otherwise there are
significant gaps between the achievable rates of two methods. When
$P\geq p_{min}^*$, the max-min points of different $P$ coincide.
These results verify our analytical findings in Proposition 1.

   \begin{figure}[tbp]
    \centering
        \includegraphics[scale=0.5]{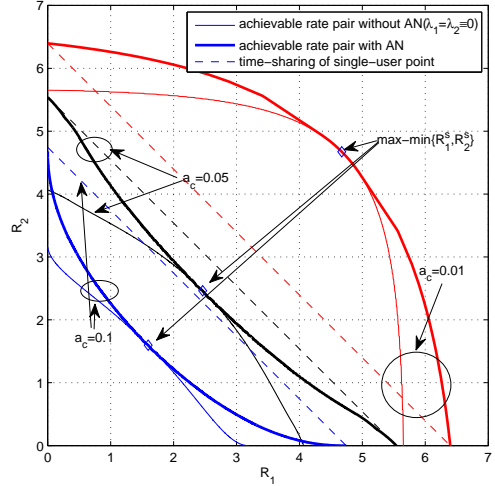}
    \caption{Achievable secrecy rate region for different channel conditions
                at $a=1$, $N=1$, $P=100$.}
    \label{fig:diff channel condition}
\vspace{-0.2cm}

   \end{figure}
Fig. \ref{fig:diff channel condition} shows the benefit of the
artificial noise (AN).
The secrecy rates with and without artificial noise ($\lambda_1=\lambda_2\equiv0$) are plotted for fixed $P$ but
different channels. We observe that the rate region with artificial noise is always larger than that without artificial noise, in particular, artificial noise increases the secrecy rate achieved on single user point.
For $a_c=0.01$, the point $R_{min}^*$ is the optimal point and is superior than points achieved by time-sharing.
For larger $a_c$, the optimal points are obtained by time-sharing.

The above results show that the secrecy rate region in cooperative
symmetric interference channel with artificial noise behaves
significantly different from the classical capacity in symmetric
interference channel (\cite{EtikTseWang08}). When classical capacity
is concerned, the max-min point is always attained when the sum rate
$R_1+R_2$ is also maximized. However, for secrecy capacity, the
point $\max(R_1^s+R_2^s)$ does not necessarily coincide with
$R_{min}^*$ all the time.

%\section{Conclusion}
%
%%\input{conclusion}
%
%In this work we studied the cooperative secret communication in the symmetric interference channel with help
% of artificial noise. We considered two kinds of optimal points, i.e.\ max-min point and single user point. Our results gave the optimal power allocation and artificial noise parameters for both points. It is shown that there exists a critical value $P_c$ of the power constraint, if the power constraint is smaller than $P_c$, the max-min point is the optimal point on the secrecy rate region, otherwise the time-sharing method between two single user points achieves larger secrecy rate pairs. It is also shown that the artificial noise can indeed enlarge the secrecy rate region, especially for the single user point.  %We showed that both points can be optimal secrecy rate pair under different channel condition and power constraints. Furthermore, we observed that due to the nonconvexity of the secrecy rate region, max-min point is not always the sum-rate maxmizing point, as it is in the classical capacity of symmetric interference channel.

\bibliographystyle{IEEEtran}
\bibliography{Tao_CL2010-1121}

\begin{thebibliography}{1}
\providecommand{\url}[1]{#1}
\csname url@rmstyle\endcsname
\providecommand{\newblock}{\relax}
\providecommand{\bibinfo}[2]{#2}
\providecommand\BIBentrySTDinterwordspacing{\spaceskip=0pt\relax}
\providecommand\BIBentryALTinterwordstretchfactor{4}
\providecommand\BIBentryALTinterwordspacing{\spaceskip=\fontdimen2\font plus
\BIBentryALTinterwordstretchfactor\fontdimen3\font minus
  \fontdimen4\font\relax}
\providecommand\BIBforeignlanguage[2]{{%
\expandafter\ifx\csname l@#1\endcsname\relax
\typeout{** WARNING: IEEEtran.bst: No hyphenation pattern has been}%
\typeout{** loaded for the language `#1'. Using the pattern for}%
\typeout{** the default language instead.}%
\else
\language=\csname l@#1\endcsname
\fi
#2}}

\bibitem{Wyner:75}
A.~D. Wyner, ``The wire-tap channel,'' \emph{Bell. Syst. Tech. J.}, vol. 54,
  no. 8, 1975.

\bibitem{Liang:08}
Y.~Liang and H.~V. Poor, ``Secure communication over fading channels,''
  \emph{IEEE Trans. on Infor. Theory}, vol.~52, no.~6, pp. 2470--2492, June
  2008.

\bibitem{Tekin:08}
E.~Tekin and A.~Yener, ``The general {Gaussian} multiple access and two-way
  wire-tap channels: Achieveable rates and cooperative jamming,'' \emph{IEEE
  Trans. on Infor. Theory}, vol.~54, no.~6, pp. 2735--2751, June 2008.

\bibitem{Liang_mac:08}
Y.~Liang and H.~V. Poor, ``Multiple access channels with confidential
  messages,'' \emph{IEEE Trans. on Infor. Theory}, vol.~54, no.~3, pp.
  976--1002, March 2008.

\bibitem{Koyluoglu:09}
O.~Koyluoglu and H.~Gamal, ``On the secrecy rate region for the interference
  channel,'' in \emph{Proc. IEEE PIMRC}, 2008.

\bibitem{GoelNegi08}
S.~Goel and R.~Negi, ``Guaranteeing secrecy using artificial noise,''
  \emph{IEEE Trans. Wireless Comm.}, vol.~7, no.~6, Jun, 2008.

\bibitem{Yates:08}
R.~D. Yates, D.~Tse, and Z.~Li, ``Secret communication on interference
  channels,'' in \emph{Proc. IEEE Int. Symp. Information Theory}, July 2008.

\bibitem{LiuMaric:08}
R.~Liu, I.~Maric, P.~Spasojevic, and R.~Yates, ``Discrete memoryless
  interference and broadcast channels with condifential messages: Secrecy rate
  region,'' \emph{IEEE Trans. on Infor. Theory}, vol.~54, no.~6, Jun. 2008.

\bibitem{EtikTseWang08}
R.~Etik, D.~Tse, and H.~Wang, ``Gaussian interference channel capacity to
  within one bit,'' \emph{IEEE Trans. on Infor. Theory}, vol.~54, no.~12, Dec.
  2008.

\end{thebibliography}

\end{document}